\theoremstyle{plain}
\newtheorem{Prop}{Proposition}
\theoremstyle{remark}
\numberwithin{equation}{section}
\begin{document}

\title{L\"{u}{}ders Theorem for Unsharp Quantum Measurements}
\author[P Busch and J Singh]{Paul Busch and Javed Singh\\
$\text{\rm\tiny Former  Affiliation (PB): Department of Mathematics, University of Hull, UK}$\\
$\text{\rm\tiny Current (PB): Department of Mathematics, University of York, UK}$\\
$\text{\rm\tiny E-mail address: {\tt paul.busch@york.ac.uk}}$\\
\hspace{1cm}\hfill\\
$\text{\rm\tiny Published in: {\em Physics Letters A} {\bf 249} 10--12 (1998).}$\\
DOI: \href{http://dx.doi.org/10.1016/S0375-9601(98)00704-X}{10.1016/S0375-9601(98)00704-X}.
}

\maketitle

\begin{abstract}
A theorem of L\"{u}{}ders states that an ideal measurement of a sharp
discrete observable does not alter the statistics of another sharp
observable if, and only if, the two observables commute. It will be shown
that this statement extends to certain pairs of unsharp observables.
Implications for local relativistic quantum theory will be discussed.
\end{abstract}

\section{\protect\smallskip Introduction}

\vspace{0in}\vspace{-0.03in}\smallskip \vspace{0in}Let ${\mathcal H}$ be a
complex separable Hilbert space, $A$ a self-adjoint operator with\vspace{%
-0.03in} discrete spectrum and spectral decomposition $A=\sum_ia_iE_i^A$.
Here the $a_i,i=1,2,\cdots ,N\le \infty $, are the distinct eigenvalues,
with spectral projections $E_i^A$ adding up to unity, $\sum_iE_i^A=I$. A
non-selective ideal measurement of the observable represented by $A$ gives
rise to the L\"{u}{}ders transformation of states (density operators) \cite
{lud} 
\begin{equation}
\rho \mapsto {\mathcal I}_L\left( \rho \right) =\sum_{i=1}^NE_i^A\rho \,E_i^A.
\label{lu1}
\end{equation}
Let $B$ be any self-adjoint operator on ${\mathcal H}$. Then the L\"{u}{}ders
theorem asserts that 
\begin{equation}
{\rm tr}[{\mathcal I}_L(\rho )\cdot \,B]={\rm tr}[\rho \cdot \,B]  \label{lu2}
\end{equation}
holds for all states $\rho $ exactly when $B$ commutes with all $E_i$ and
thus with $A$. This result is interpreted as follows: if $A$ and $B$
commute, then the outcomes of a measurement of an observable represented by $%
B$ do not depend on whether $A$ has been measured first.

In this paper the question will be investigated as to what extent the
L\"{u}ders theorem pertains also to {\sl unsharp} observables, represented
by {\sc pov} measures that are not {\sc pv} measures. If the observable
measured first, $A$, is unsharp, it will be represented by a complete set of
coexistent effects $E_i,i=1,\cdots ,N$, $\sum_iE_i=I$. In this case the
L\"{u}{}ders transformation is defined as \cite{blm} 
\begin{equation}
\rho \mapsto {\mathcal I}_L(\rho )=\sum_{i=1}^NE_i^{1/2}\rho \,E_i^{1/2}.
\label{lu3}
\end{equation}
The second observable could also be an unsharp observable. In this case, $B$
will represent an effect in the range of the corresponding {\sc pov}
measure. The question is whether commutativity is still necessary for (\ref
{lu2}) to hold for all states.

\section{Generalised L\"{u}ders Theorem}

We will formulate and prove two propositions which cover two classes of
cases of a general L\"{u}{}ders theorem yet to be proved for arbitrary {\sc %
pov} measures.

\noindent 

\begin{Prop}
Let $E_i,$ $i=1,2,\cdots ,N\le \infty $, be a complete family of effects.
Let $B=\sum_kb_kP_k$ be an effect with discrete spectrum given by the
strictly decreasing sequence of eigenvalues $b_k\in [0,1]$ and spectral
projections $P_k,$ $k=1,\cdots ,K\le \infty $. Then ${\rm tr}[{\mathcal I}%
_L(\rho )\cdot \,B]={\rm tr}[\rho \cdot \,B]$ [Eq. ~(\ref{lu2})] holds for
all states $\rho $ exactly when all $E_i$ commute with $B$.
\end{Prop}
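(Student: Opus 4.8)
The plan is to prove both implications, with the real content in the necessity direction. For sufficiency, suppose every $E_i$ commutes with $B$. Since $\sqrt{\,\cdot\,}$ is a uniform limit of polynomials on $[0,1]$, each $E_i^{1/2}$ is a norm limit of polynomials in $E_i$ and therefore also commutes with $B$; using cyclicity of the trace and $\sum_i E_i = I$ one gets ${\rm tr}[{\mathcal I}_L(\rho)B] = \sum_i {\rm tr}[\rho\, E_i^{1/2}BE_i^{1/2}] = \sum_i {\rm tr}[\rho\, E_iB] = {\rm tr}[\rho B]$, which is (\ref{lu2}).

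For necessity, the first step is to convert the statistical hypothesis into an operator identity. Because the Kraus operators $E_i^{1/2}$ are self-adjoint, cyclicity of the trace gives ${\rm tr}[{\mathcal I}_L(\rho)B] = {\rm tr}[\rho\,{\mathcal I}_L(B)]$ with ${\mathcal I}_L(B) = \sum_i E_i^{1/2}BE_i^{1/2}$; note that this same map is both unital, ${\mathcal I}_L(I)=\sum_i E_i = I$, and trace-preserving. Hypothesis (\ref{lu2}) then reads ${\rm tr}[\rho\,({\mathcal I}_L(B)-B)]=0$ for all states $\rho$; testing on one-dimensional projections $\rho = |\psi\rangle\langle\psi|$ and using that ${\mathcal I}_L(B)-B$ is self-adjoint forces the operator equation ${\mathcal I}_L(B)=B$, i.e. $B$ is a fixed point of ${\mathcal I}_L$.

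The conceptual pivot is the identity $\sum_i [E_i^{1/2},B]^{*}[E_i^{1/2},B] = {\mathcal I}_L(B^2)-B^2$, valid once ${\mathcal I}_L(B)=B$ and obtained by expanding the commutators and collapsing sums via $\sum_i E_i = I$. Its left side is manifestly $\ge 0$ (this is Kadison's inequality ${\mathcal I}_L(B^2)\ge {\mathcal I}_L(B)^2 = B^2$ for the unital positive map ${\mathcal I}_L$), and it vanishes exactly when every $[E_i^{1/2},B]$ does, which in turn yields $[E_i,B]=0$. So everything reduces to the equality ${\mathcal I}_L(B^2)=B^2$. In finite dimensions this is immediate from trace-preservation, since $0\le {\mathcal I}_L(B^2)-B^2$ has vanishing trace; \emph{the main obstacle is the infinite-dimensional case}, where $B$ need not be trace class and this shortcut fails.

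To remove that obstacle I would exploit the discrete spectral structure $B=\sum_k b_kP_k$ with $b_1>b_2>\cdots$ by peeling off eigenspaces from the top. Fix a unit vector $\psi\in P_1{\mathcal H}$; then $b_1 = \langle\psi,B\psi\rangle = \langle\psi,{\mathcal I}_L(B)\psi\rangle = \sum_i \langle E_i^{1/2}\psi,\,B\,E_i^{1/2}\psi\rangle \le b_1\sum_i \|E_i^{1/2}\psi\|^2 = b_1$, using $B\le b_1 I$ and $\sum_i \|E_i^{1/2}\psi\|^2 = \langle\psi,\sum_i E_i\,\psi\rangle = 1$. Equality throughout forces $\langle E_i^{1/2}\psi,(b_1I-B)E_i^{1/2}\psi\rangle = 0$ for each $i$, hence $BE_i^{1/2}\psi = b_1 E_i^{1/2}\psi$, i.e. $E_i^{1/2}\psi\in P_1{\mathcal H}$; this gives $[E_i^{1/2},P_1]=0$ and so $[E_i,P_1]=0$. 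One then restricts to $(I-P_1){\mathcal H}$, on which $\{E_i(I-P_1)\}$ is again a complete family of effects and $B$ restricts to an effect with top eigenvalue $b_2$ still fixed by the induced Lüders map, and repeats. Induction over $k$ gives $[E_i,P_k]=0$ for every $k$, whence $[E_i,B]=0$. The delicate points to watch are the validity of the equality-in-Schwarz step when the eigenspaces are infinite dimensional, and the verification that the fixed-point property and completeness are correctly inherited on each residual subspace so that the induction (possibly over infinitely many eigenvalues) closes.
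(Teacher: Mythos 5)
Your proof is correct and, in its operative part, is essentially the paper's own argument: reduce (\ref{lu2}) to the fixed-point identity $B=\sum_iE_i^{1/2}BE_i^{1/2}$, saturate the inequality $\langle\psi,B\psi\rangle\le b_1$ on the top eigenspace to get $[E_i,P_1]=0$, then peel off $b_1P_1$ and induct over the decreasing eigenvalues. The detour through $\sum_i[E_i^{1/2},B]^{*}[E_i^{1/2},B]={\mathcal I}_L(B^2)-B^2$ is a correct and illuminating aside, but since you rightly recognise it does not close the infinite-dimensional case and you fall back on the spectral peeling, it does not change the route.
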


\begin{proof}
Commutativity is obviously sufficient for (\ref{lu2}) to hold. To prove the
converse implication, assume that (\ref{lu2}) holds for all states $\rho $.
This is equivalent to the following equation : 
\begin{equation}
B=\sum_{i=1}^NE_i^{1/2}B\,E_i^{1/2}.  \label{p1}
\end{equation}
Let $\varphi \in {\mathcal H}$ be an arbitrary vector. Then Eq. (\ref{p1}) gives:

\begin{eqnarray}
\langle P_1\varphi |BP_1\varphi \rangle &=&b_1\langle P_1\varphi |P_1\varphi
\rangle =\sum_{i=1}^N\langle E_i^{1/2}P_1\varphi |BE_i^{1/2}P_1\varphi
\rangle  \nonumber \\
&\le &b_1\sum_{i=1}^N\langle E_i^{1/2}P_1\varphi |E_i^{1/2}P_1\varphi
\rangle =b_1\langle P_1\varphi |P_1\varphi \rangle .  \label{p2}
\end{eqnarray}
It follows that equality must hold and thus all terms of the first sum must
equal the corresponding terms of the second sum. Taking into account the
fact that $b_1=||B||$, this can be expressed as follows: $%
||(b_1I-B)^{1/2}E_i^{1/2}P_1\varphi ||=0$, which is to hold for all $\varphi 
$. Therefore, $BE_i^{1/2}P_1=b_1E_i^{1/2}P_1$, and so $%
P_1E_i^{1/2}P_1=E_i^{1/2}P_1$. This implies that all $E_{i\text{ }}$commute
with $P_1$.

Now proceed as follows: substitute $B_1=B$ with $B_2=B-b_1P_1$. The
commutativity just proven together with Eq. (\ref{p1}) entail the same
equation for $B_2$: 
\begin{equation}
B_2=\sum_{i=1}^NE_i^{1/2}B_2\,E_i^{1/2}  \label{p3}
\end{equation}
Applying the same argument as before with $b_2=||B_2||$ yields the
commutativity of all $E_{i\text{ }}$with $P_2$. Thus one concludes
inductively that the $E_i$ commute with {\sl all }$P_k$ and hence with $B$. 
\end{proof}

\noindent 

\begin{Prop}
Let $E_1=E$ be an effect, $E_2=I-E$. Let $B$ be any effect. Then ${\rm tr}[%
{\mathcal I}_L(\rho )\cdot \,B]={\rm tr}[\rho \cdot \,B]$ [Eq. ~(\ref{lu2})]
holds for all states $\rho $ exactly when $E_1$ commutes with $B$.
\end{Prop}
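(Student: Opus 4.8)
The plan is to reduce, exactly as at the start of the proof of Proposition 1, to the fixed-point equation $B = E^{1/2}BE^{1/2} + (I-E)^{1/2}B(I-E)^{1/2}$, which is the two-term instance of (\ref{p1}) and is equivalent to (\ref{lu2}) holding for all $\rho$; sufficiency is again immediate, since if $[E,B]=0$ then $E^{1/2}$ and $(I-E)^{1/2}$ commute with $B$ and the right-hand side collapses to $(E+(I-E))B=B$. The structural feature that makes the two-effect case tractable is that the two operators $C:=E^{1/2}$ and $D:=(I-E)^{1/2}$ are both functions of $E$, hence commute with each other and with every spectral projection of $E$, and satisfy $C^2+D^2=I$. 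This is precisely the ingredient absent from the general $N$-effect problem, where the $E_i^{1/2}$ need not admit a joint spectral measure; so, unlike in Proposition 1, I would not induct on eigenvalues of $B$ (which may now have continuous spectrum) but work entirely in the spectral representation of $E$.

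The heart of the matter is the elementary inequality $g(x,y):=\sqrt{xy}+\sqrt{(1-x)(1-y)}\le 1$ for $x,y\in[0,1]$, which is Cauchy--Schwarz for the unit vectors $(\sqrt{x},\sqrt{1-x})$ and $(\sqrt{y},\sqrt{1-y})$, with equality if and only if these coincide, i.e. $x=y$. Heuristically, writing $P^{E}$ for the spectral measure of $E$, the ``matrix element'' of the fixed-point equation between generalized eigenspaces for eigenvalues $x$ and $y$ reads $\langle x|B|y\rangle = g(x,y)\,\langle x|B|y\rangle$, so $\langle x|B|y\rangle = 0$ whenever $x\ne y$; that is, $B$ is diagonal with respect to $E$ and therefore commutes with it.

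To make this rigorous I would sandwich the fixed-point equation between spectral projections $Q_\Delta:=P^{E}(\Delta)$ and $Q_{\Delta'}:=P^{E}(\Delta')$ for disjoint closed intervals $\Delta,\Delta'\subset[0,1]$ separated by a gap. Since $C,D$ commute with $Q_\Delta,Q_{\Delta'}$, the block $Y:=Q_\Delta B Q_{\Delta'}$ is itself a fixed point of $Y\mapsto CYC+DYD$; using $\|C\xi\|^2+\|D\xi\|^2=\|\xi\|^2$ together with the identities $\|C\xi\|^2/\|\xi\|^2\in\Delta$ for $\xi\in\operatorname{ran}Q_\Delta$, a vectorwise Cauchy--Schwarz estimate gives $\|CYC+DYD\|\le\kappa\,\|Y\|$ with $\kappa=\max\{g(x,y):x\in\Delta,\,y\in\Delta'\}<1$, forcing $Y=0$. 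In particular $P^{E}([0,a])\,B\,P^{E}([b,1])=0$ for all $a<b$. A strong-limit step, letting $b\downarrow a$, upgrades this to $P^{E}([0,a])\,B\,P^{E}((a,1])=0$, and the adjoint relation likewise, so $B$ commutes with every $P^{E}([0,a])$ and hence with $E$.

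The main obstacle is exactly this passage from the scalar equality-case of Cauchy--Schwarz to the operator setting: one needs a genuine contraction constant $\kappa<1$, which is available only for intervals separated by a gap, so the abutting blocks $P^{E}([0,a])\,B\,P^{E}((a,1])$ arising from continuous spectrum of $E$ must be recovered by the limiting argument above. A slicker but less self-contained alternative is to note that the fixed-point equation forces $\Phi(B^2)-B^2=T^{*}T\ge 0$ with $T=CBD-DBC$ skew-adjoint and $\Phi(X)=CXC+DXD$; in finite dimensions $\operatorname{tr}$ annihilates the right-hand side (as $\Phi$ is trace-preserving) and yields $T=0$, whence $[E,B]=0$ by the same spectral computation, but in infinite dimensions $B^2$ need not be trace class and one is thrown back on the spectral estimate in any case.
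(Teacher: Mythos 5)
Your argument is correct, but it takes a genuinely different route from the paper's. The paper likewise reduces the claim to the fixed-point equation (\ref{p1}) with $N=2$, but then proceeds purely algebraically: a short manipulation of that equation (sandwiching $B-E^{1/2}BE^{1/2}=(I-E)^{1/2}B(I-E)^{1/2}$ once more between the commuting square roots) yields the double-commutator identity $\left[E^{1/2},\left[E^{1/2},B\right]\right]=0$, and the Appendix lemma --- a Kleinecke--Shirokov-type argument showing that a bounded derivation $d$ with $d^2a=0$ satisfies $d^n(a^n)=n!\,(da)^n$, so that $da$ has spectral radius zero --- then forces the self-adjoint operator $i\left[E^{1/2},B\right]$ to be quasi-nilpotent and hence zero. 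You instead work in the spectral representation of $E$, exploiting the same structural fact you correctly isolate (that $E^{1/2}$ and $(I-E)^{1/2}$ are commuting functions of $E$) to show that each off-diagonal block $P^{E}(\Delta)BP^{E}(\Delta')$ over separated closed spectral intervals is a fixed point of a strict contraction, with contraction constant controlled by the strict Cauchy--Schwarz bound $\sqrt{xy}+\sqrt{(1-x)(1-y)}<1$ for $x\neq y$; the strong-limit step you supply for abutting intervals is indeed needed to handle continuous spectrum, and it closes the argument. Each approach has its advantages: the paper's proof is shorter and transfers verbatim to any unital $C^{*}$-algebra, while yours is more self-contained (no derivation lemma, no spectral-radius formula), makes transparent exactly why the two-outcome case is tractable when the general $N$-effect case is not, and, like the paper's, covers arbitrary bounded self-adjoint $B$ rather than only effects. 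Your closing observation --- that $\Phi(B^2)-B^2=T^{*}T$ with $T=CBD-DBC$ gives a one-line proof in finite dimensions via the trace but does not survive passage to infinite dimensions --- is also accurate.
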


\begin{proof}
Eq. ~(\ref{lu2}), taken for all states $\rho $, is equivalent to 
Eq. ~(\ref{p1}%
) [with $N=2$]. If $E_1$ commutes with $B$, then (\ref{p1}) follows
trivially. Conversely, assume this equation holds. By a simple algebraic
manipulation one deduces the following: 
\[
E_1B+BE_1=2E_1^{1/2}BE_1^{1/2},
\]
and this is equivalent to 
\[
\left[ E_1^{1/2},\left[ E_1^{1/2},B\right] \right] =0,
\]
where $\left[ A,B\right] $ denotes the commutator of the bounded 
operators $A,B$. It follows that the self-adjoint operator 
$C:=i\left[E_1^{1/2},B\right] $ is quasi-nilpotent, 
i.e. ~its spectrum is $\{0\}$. [This
follows from  a Lemma stated and proved in the Appendix.]  
Therefore $C=0$, and thus $E_1$ and $B$ commute. 
\end{proof}

\section{Discussion}

We have generalised the L\"{u}ders theorem to two classes of unsharp
measurements: in the first case, arbitrary L\"{u}ders transformations are
allowed while the test effect $B$ has discrete spectrum; in the second case,
the spectrum of the effect $B$ is arbitrary but the class of  L\"{u}ders
transformations is restricted to those corresponding to simple observables
[i.e. ~{\sc pov }measures with ranges $\{E_1,I-E_1\}$]. Note that Proposition
2 holds not only for effects $B$ but for any bounded self-adjoint operator.

We take these results as indications that the statement of the L\"{u}ders
theorem can be expected to hold in full generality. Unfortunately, the above
simple proofs do not extend in an obvious way to the general case so that
further investigations are required. From a physical point of view, however,
the present results may already be considered sufficient for the following
considerations. If an ideal measurement of an observable defined by the
complete set of effects $E_1,E_2,\cdots $ is realisable, then it seems
plausible that it should also be possible to perform ideal measurements of
the simple observables $E_i,I-E_i$. Thus Proposition 2 would apply to each
of those. 

Now consider the well-known application of the L\"{u}ders theorem in the
context of relativistic quantum theory. Here the observable $A$ defined by
the effects $E_1,E_2,\cdots $ is taken to belong to a local algebra
associated with some bounded spacetime region $X$, and the effect $B$ is
considered to belong to a local algebra associated with another bounded
region  $Y$ with a spacelike separation from the first region. The
requirement of (Einstein) causality then states that a measurement of any
observable $A$ in region $X$ should not have an observable effect in region $%
Y$. This condition is formalised by means of equations (\ref{lu2}) or (\ref
{p1}) if the measurement of $A$ is assumed to be an ideal measurement with
the ensuing non-selective L\"{u}ders transformation (\ref{lu1},\ref{lu3}).
If $A$ is a simple observable, then Proposition 2 applies and we conclude
that every effect in the algebra associated with region $Y$ commutes with
the effects in the range of $A$. Hence, local commutativity follows from
Einstein causality under the weaker assumption of the observables being
defined as {\sc pov} measures rather than self-adjoint operators. This
generalises a result of a (fundamental but apparently not too well known)
paper by Schlieder \cite{schl}.

\appendix
\section{A Lemma}

We present a proof of the following fact: Let $d:{\mathcal A}\rightarrow {\mathcal A}
$ be a bounded derivation on the unital $C^{*}$-algebra ${\mathcal A}$ with unit 
$I$. Let $a\in {\mathcal A}$ be such that $d^2a=0$. Then $da$ is
quasi-nilpotent. 

First note that $d^2a=0$ implies $d^ka=0$ for $k=2,3,\cdots $. This can be
used to prove by induction that $d^n(a^n)=n!(da)^n$, $n=1,2,\cdots $. It
then follows that 
\[
||(da)^n||^{1/n}=(n!)^{-1/n}||d^n(a^n)||^{1/n}\le (n!)^{-1/n}||d||~||a||.
\]
Therefore the spectral radius of $da$, $r(da)$, is
\[
r(da)=\lim_{n\rightarrow \infty }||(da)^n||^{1/n}=0,
\]
so that the spectrum of $da$ is $\{0\}$.

The proof of Proposition 2 makes use of the fact that $d:B\rightarrow \left[
E_1^{1/2},B\right] $ is a bounded derivation on the algebra of bounded
operators.

%\section{References}


\begin{thebibliography}{9}
\bibitem{lud}   G. ~L\"{u}ders: \"{U}ber die Zustands\"{a}nderung durch den
Messprozess. {\em Ann. ~Physik} (6.F.) {\bf 8 } (1951) 322--328.

\bibitem{blm}   P. ~Busch, P. ~Lahti, P. ~Mittelstaedt:  {\em The Quantum
Theory of Measurement}, Springer, Berlin, 2nd revised ed., 1996.

\bibitem{schl}  S. ~Schlieder: Einige Bemerkungen zur Zustands\"{a}nderung
von relativistischen quantenmechanischen Systemen durch Messungen und zur
Lokalit\"{a}tsforderung. {\em Commun. ~math. ~Phys. }{\bf 7} (1968) 305--331.
\end{thebibliography}
\end{document}